\documentclass[letterpaper, 10 pt, conference]{ieeeconf}  

\IEEEoverridecommandlockouts                              

\usepackage[utf8]{inputenc}
\usepackage[T1]{fontenc}

\usepackage{cite}
\usepackage{amsmath, amssymb, amsfonts}
\usepackage{mathtools}

\usepackage{algorithm}
\usepackage{algpseudocode}

\usepackage{graphicx}
\usepackage{subcaption}
\usepackage{xcolor}
\usepackage{booktabs}
\usepackage{siunitx}

\usepackage{url}
\usepackage{hyperref}

\usepackage{svg}

\newtheorem{rem}{Remark}
\newtheorem{prob}{Problem}
\newtheorem{prop}{Proposition}
\newtheorem{defi}{Definition}

\newtheorem{thm}{Theorem}
\newtheorem{exam}{Example}
\newtheorem{cor}{Corollary}

\newcommand{\Dc}{\mathcal{D}}

\newcommand{\Gc}{\mathcal{G}}

\newcommand{\Rb}{\mathbb{R}}
\newcommand{\Tc}{\mathcal{T}}

\newcommand{\Ad}{\mathrm{Ad}}

\renewcommand{\paragraph}[1]{}

\hypersetup{
    colorlinks=true,
    linkcolor=black,
    citecolor=black,
    urlcolor=blue
}

\def\BibTeX{{\rm B\kern-.05em{\sc i\kern-.025em b}\kern-.08em
    T\kern-.1667em\lower.7ex\hbox{E}\kern-.125emX}}

\title{\LARGE \bf
How Sensor Attacks Transfer Across Lie Groups
}

\author{Rijad Alisic and Saurabh Amin*
\thanks{*Rijad Alisic and Saurabh Amin are with the Laboratory for Information \& Decision Systems at Massachusetts Institute for Technology, USA. \mbox{(e-mail: alisic@mit.edu, amins@mit.edu)}}
\thanks{This work was supported by the Knut and Alice Wallenberg Foundation's Postdoctoral Fellowship at MIT.}
}

\begin{document}

\maketitle
\thispagestyle{empty}
\pagestyle{empty}

\begin{abstract}
    Sensor spoofing analysis in cyber-physical systems is predominantly confined to linear state spaces, where attack transferability is trivial. On Lie groups, however, the noncommutativity of the dynamics can distort certain sensor attacks, exposing nominally stealthy attacks during complex maneuvers. We present a geometric framework characterizing when a sensor attack can transfer across operating conditions, preserving both its physical impact and stealthiness. We prove that successful transfer requires the attack to commute with the nominal dynamics (a Lie bracket condition), which isolates transferable attacks to an invariant subspace, while attacks outside this subspace identifiably alter residuals. For small deviations from ideal transferable attacks, our decomposition theorem reveals a fundamental asymmetry: the flow's Adjoint action amplifies the physical impact of the bracket-violating component. Furthermore, although the attack perturbs the innovation linearly, the accumulated error drift undergoes distortion via the Adjoint action. Finally, we demonstrate how turning maneuvers on a Dubins unicycle collapse the transferable subspace to a single direction, verifying that imperfect attacks remain within theoretical detection bounds.
\end{abstract}

\section{Introduction}

Sensor spoofing is a well-documented threat to cyber-physical systems, increasingly exacerbated by their reliance on shared communication infrastructure. Documented incidents range from GPS spoofing of autonomous vehicles and drones \cite{tippenhauer2011requirements, kerns2014unmanned, dai2023spoofing, aftatah2026secure}, to false data injection disrupting vehicle platoon kinematics \cite{lyons2026distributed} and the dynamical stability of critical infrastructure such as water distribution networks \cite{amin2013cyber}. The theoretical foundations for such attacks in linear systems have been thoroughly developed \cite{cardenas2008research, mo2009replay, pasqualetti2013attack, teixeira2015secure}. Because additive sensor offsets transfer trivially across operating conditions in geometrically flat Euclidean spaces, existing literature predominantly focuses on detectability and stealthiness \cite{chong2019tutorial}, treating transferability as a non-issue.

Many safety-critical systems, however, have nontrivial geometric state spaces: vehicle poses on $SE(2)$ or $SE(3)$, robot orientations on $SO(3)$, and nonholonomic vehicles on the Heisenberg group \cite{bloch2015nonholonomic}. While stealthy attack generation has been recently extended to general nonlinear systems using geometric control~\cite{zhang2022stealthy}, these approaches do not capture the specific algebraic symmetries of Lie groups. Because group operations do not commute, the system's nonlinear motion can cause the attack to rotate relative to its position: an attack that is stealthy during straight-line driving may become detectable during a turn. Transferability becomes a geometric constraint that must be satisfied jointly with stealthiness. To analyze this, we model the system on a Lie group with left-invariant dynamics \cite{sola2021microlietheorystate, murray1994mathematical}. This algebraic structure, exploited by the invariant extended Kalman filter (IEKF) and equivariant observers \cite{barrau2017invariant, mahony2022observer} for robust filtering, is exploited here by an attacker.

Concurrently, behavioral frameworks like DeePC \cite{coulson2019deepc,berberich2020nonlinear} demonstrate that observed trajectories can bypass explicit modeling for control, inspiring a data-driven attack perspective \cite{alisic2021ecc, taheri2021data, krishnan2021detection}. In this paper, we implicitly assume an attacker who can generate such data-driven replay attacks. While somewhat trivial in linear spaces, extending these replay frameworks to non-Euclidean spaces remains fundamentally unexplored, because on general nonlinear systems, injection attacks can be distorted by the dynamics.

This paper develops a unified algebraic framework to characterize attack predictability on Lie groups. We answer a fundamental question: what algebraic conditions guarantee that an attack learned in one operating condition retains identical dynamical impact in another? We call such attacks \emph{transferable}; if they furthermore evade detection, they are \emph{stealthily transferable}. By reducing trajectory-level security to finite-dimensional algebraic checks, our contributions are:
\begin{enumerate}
    \item A formalization of attack equivalence, defining transferability across the state space alongside a new notion of stealth on general manifolds.
    \item A Lie-algebraic characterization of transferability, showing both the attack signal and detector deviation must independently reside in the invariant subspace $\Delta_{f_e} = \ker([f_e, \, \cdot])$.
    \item A decomposition theorem quantifying how bracket-violating residuals alter dynamical impact and stealth through the adjoint action.
\end{enumerate}

This decomposition reveals that dynamical impact and stealth both depend on $\Delta_{f_e}$, but are structurally different. When attacks misalign with this subspace, the change in dynamical impact is warped by the adjoint action of the flow acting as a kinematic lever-arm. The change in stealth, conversely, alters the detector state linearly, totally isolated from dynamic amplification; only the accumulated state error is amplified by the adjoint action at the timestep after. This structural duality dictates that small errors made by the attacker may impact the dynamics severely, yet retain bounded stealth deterioration. Furthermore, complex maneuvers collapse this invariant subspace, giving defenders a principled lever to force attacks into detectable directions.

The paper is structured as follows. Section~\ref{sec:problem} defines the problem setup. Section~\ref{sec:transfer} derives the Lie algebraic subspace determining invariant transfer. Section~\ref{sec:main_results} analyzes deviations from these nominal settings using our decomposition theorem. Section~\ref{sec:examples} demonstrates the theory on a Dubins unicycle model, and Section~\ref{sec:conclusions} concludes the paper.

\section{Problem Setup}\label{sec:problem}

The state of a general dynamical system naturally evolves on a \emph{manifold}. To analyze continuous dynamics, we require smooth manifolds, which allow us to define derivatives and tangent velocities. In this work, we focus on a subclass of smooth manifolds with an algebraic structure: \emph{Lie groups}.

\subsection{Lie Groups}
Let us introduce some important tools for our analysis. Formally, we define:
\begin{defi}
    A \emph{Lie group} $\Gc$ is a smooth manifold equipped with smooth group operations: multiplication $\mu(g,h) = g \cdot h$, identity $e \in \Gc$, and inverse $g^{-1}$.
\end{defi}

On any smooth manifold, one can define a smooth curve $\gamma: \Rb \to \Gc$. For a curve passing through point $g \in \Gc$ such that $\gamma(0) = g$, we define the \emph{tangent space} at $g$, denoted $\Tc_g\Gc$, as the space of all possible velocity vectors: $\Tc_g\Gc = \{ \dot{\gamma}(0) \mid \text{for all } \gamma \text{ where } \gamma(0) = g \}$.
\begin{exam}
The tangent space $\Tc_g\Gc$ is a vector space of \emph{differential operators}. Given a local coordinate chart with coordinates $\{x^i\}_{i=1}^n$, the basis vectors of $\Tc_g\Gc$ are the partial derivatives $\{\partial_{x^i}\}_{i=1}^n$.
\end{exam}

General manifolds rely on local operations like following the flow of a vector field to define movement across the space. Lie groups, however, use a global algebraic rule to define a \emph{translation operator} $T_g: \Gc \to \Gc$, shifting any $h \in \Gc$ by $g \in \Gc$ via group multiplication.
\begin{exam}
    On Lie Groups, the multiplication operator can be used as a translation operator. In particular, for any $g\in\Gc$, we can define the 
    \begin{itemize}
        \item \emph{left translation} $L_g: \Gc \to \Gc$ as $\mu(g, \cdot)$, or the
        \item \emph{right translation} $R_g: \Gc \to \Gc$ as $\mu(\cdot, g)$.
    \end{itemize}
    Depending on the system, one or the other are used to appropriately model the dynamics.
\end{exam}
Applying the differential to these translations yields the \emph{induced mapping} (or pushforward) $T_{g*}: \Tc_h \Gc \to \Tc_{T_g h} \Gc$, which maps tangent vectors from one point to another.

Crucially, a tangent vector $\dot{\gamma}(0)$ only strictly lives in the local space $\Tc_g\Gc$, which makes comparing dynamics at different points on the manifold mathematically difficult. To compare vectors globally, we push them to the identity $e$ by translating the underlying curve prior to differentiation:
\begin{equation*}
    T_{g^{-1}*} \dot{\gamma}(0) = \frac{\mathrm{d}}{\mathrm{d}t}\bigg|_{t=0} \bigl(T_{g^{-1}}\gamma(t)\bigr) \in \Tc_e\Gc.
\end{equation*}
We denote this tangent space at the identity as $\mathfrak{g} = \Tc_e\Gc$. By equipping this vector space with the \emph{Lie bracket} $[\cdot,\cdot]: \mathfrak{g}\times\mathfrak{g}\to\mathfrak{g}$ we form the \emph{Lie algebra}, which we equip with a norm $\Vert \cdot \Vert_{\mathfrak{g}}$.

The Lie algebra is useful thanks to the existence and uniqueness of its one-parameter subgroup. In particular, for a $\xi \in \mathfrak g$, the ordinary differential equation
\begin{equation*}
    \dot \gamma(t) = T_{\gamma(t)*} \xi, \quad \gamma(0) = e,
\end{equation*}
has a unique solution which we denote as $\gamma(t) = \exp (t \xi)$. We can then define the unit displacement of $\xi$ as this curve evaluated at $t=1$, namely $\gamma(1) = \exp(\xi)$.

These curves in the one-parameter subgroup can then be translated to $g$ through $T_g \exp(t\xi) \in \Gc$. To understand how this translation affects the curve, we use the conjugate map $c^T_g: \Gc \to \Gc$, defined for any $h \in \Gc$ by $c^T_g(h) = T_g T_h T_{g^{-1}} e$. The \emph{adjoint action} $\Ad^T_g: \mathfrak{g} \to \mathfrak{g}$ is then
\begin{equation}\label{eq:Ad}
    \Ad^T_g\,\xi \;:=\; 
    \frac{\mathrm{d}}{\mathrm{d}t}\bigg|_{t=0} 
    c^T_g\bigl(\exp(t\xi)\bigr)
    \;\in\;\mathfrak{g},
\end{equation}
which computes how $\xi$ is transformed under conjugation. Furthermore, we define the induced operator norm as $\|\Ad^T_g\| := \sup_{\|\xi\|_{\mathfrak{g}}=1} \|\Ad^T_g \xi\|_{\mathfrak{g}}$.

\subsection{Dynamical Systems on Lie Groups}
Consider a curve $x(t) \in \Gc$ arising from the dynamics
\begin{equation}
    \dot x(t) = f(x, u), \quad x(0) = x_0.
\end{equation}
We assume the dynamics are $T$-invariant, meaning $f(x,u) = T_{x*} f(e,u)$. Under zero-order hold conditions where the input $u$ is held \emph{constant}, we define $f_e(u) \coloneqq f(e,u) \in \mathfrak{g}$. We write the exact solution over this interval as
\begin{equation}
    x(t) = T_{\exp\bigl(f_e(u)\,t\bigr)}x(0).
\end{equation}
This integration forms the basis of our discrete-time model, $x_k = x(t_k)$, which naturally arises because the system we consider is sampled at discrete intervals. Specifically, we consider a deterministic observation map $h\colon\Gc\to\mathbb{R}^m$:
\begin{equation}
    z_k = h(x_k).
\end{equation}
These measurements $z_k$ are subsequently used to determine the next control input $u_k$, or fed into a detector to maintain a state estimate $\hat{x}_k$.

The measurements are the attacker's point of entry. In particular, the attacker modifies the measurements through the following mechanism.
\begin{defi}\label{def:obs_action}
    For $a\in\Gc$, the map
    $\tilde{T}_g\colon\mathbb{R}^m\to\mathbb{R}^m$ defined by
    \begin{equation}
        \tilde{T}_ah(x) \;=\; h(T_ax),
    \end{equation}
    is called an \emph{observation action}.
\end{defi}
While the spoofing operator $\tilde{T}_a$ may be simple to apply in isolation, for instance by adding fixed biases to sensor readings, it is only \emph{physically consistent} spoofing operators that are observation actions, as shown in the next example.
\begin{exam}
    The induced map $\tilde{T}_g$ models changes to sensor measurements. While isolated GPS spoofing ($x, y$ translations) or isolated LIDAR spoofing (forward $f$, lateral $l$ translations) are trivially observation actions, mixed sensor suites require geometric coordination. For $h(x) = \begin{bmatrix} x & y & f & l \end{bmatrix}^\top$, a spoofing attack $\tilde{T}_a h(x) = h(x) + \begin{bmatrix} a_x & a_y & a_f & a_l \end{bmatrix}^\top$ is only a valid observation action if $a_x = a_f \cos \theta - a_l \sin \theta$ and $a_y = a_f \sin \theta + a_l \cos \theta$, ensuring $\tilde{T}_a h(x) = h(T_a x)$.
\end{exam}
As demonstrated, physically realizing an observation action requires state-dependent knowledge. Analytically computing these modifications is prohibitive in modern systems featuring thousands of sensors with disparate modalities, as they create hard-to-model implicit dependencies. 

Analytically computing these modifications is prohibitive in systems with complex sensor interdependencies. To keep matters general, we simply assume that an approximate observation action is learned from data. Specifically, let $\{x^D_k\}_{k=0}^{N-1}$ be a nominal trajectory starting at $x^D_0\in\Gc$. We define the \emph{attack dataset} as a collection of $M$ experiments
\begin{equation}
    \Dc^{a}_{x_0^D}
    \;=\; \Bigl\{\!
          \bigl\{
              (z^D_k,\; z_k^{a,(i)})
          \bigr\}_{k=0}^{N-1}
      \!\Bigr\}_{i=1}^{M},
\end{equation}
where $z^D_k = h(x^D_k)$ and $z_k^{a,(i)} = h(T_{\exp(\xi_k^{a,(i)})}x^D_k)$. Here, attacks are modeled via one-parameter subgroups $\xi_k^{a,(i)}\in\mathfrak{g}$. The attacker leverages this dataset to construct an approximation of the observation action $\tilde{T}_a$, which can then be deployed to manipulate a victim's sensor stream.

The attacker's goal is to use $\Dc^{a}_{x_0^D}$ to synthesize attacks $\tilde T_a$ that induce a predictable change (preserving \emph{dynamical impact}) while evading detection (remaining \emph{stealthy}). We now define both properties.

\begin{defi}\label{def:dynamical_impact}
    Consider a nominal trajectory $x_0, x_1, \dots$ with $x_{k+1} = T_{\exp(f_e(u_k))}x_k$, and an attacked trajectory $x_k^a = T_{\exp(\xi_k)}x_k$ for displacements $\xi_k \in \mathfrak{g}$. The local change induced by the attack is
    \begin{equation*}
        d_k^{local} = c^T_{x_{k-1}} c^T_{\exp(f_e(u_k))} \exp(\xi_k) \in \Gc,
    \end{equation*}
    where $c^T_g$ is the conjugation operator from~\eqref{eq:Ad}. Since $d_k^{local}$ depends on the specific trajectory via the state $x_{k-1}$, we define the \emph{dynamical impact} as the state-independent part of $d_k^{local}$:
    \begin{equation}\label{eq:dyn_impact}
        d_k := c^T_{\exp(f_e(u_k))} \exp(\xi_k)  \in \Gc.
    \end{equation}
    Two attack sequences $\{\xi_k^1\}$ and $\{\xi_k^2\}$ have the \emph{same dynamical impact} if $d_k^1 = d_k^2$ for all $k$.
\end{defi}
An attack displacement $\xi \in \mathfrak{g}$ is \emph{transferable} if the attack synthesized from the nominal trajectory has the same dynamical impact when applied to the victim's trajectory.

To model detection, we define the observation displacement $\eta_k \in \mathfrak{g}$ between the true state $x_k$ and the detector's estimate $\hat{x}_k$ as:
\begin{equation*}
    \hat{x}_k = T_{\exp(\eta_k)} x_k.
\end{equation*}
To preserve the geometric structure, this displacement updates multiplicatively:
\begin{equation*}
    \exp(\eta_k) = T_{\exp \big( K(I_k) \big)} \exp\bigl(\Ad_{\exp(f_e(u_{k-1}))}^T \eta_{k-1}\bigr)
\end{equation*}
where $K : \mathbb{R}^m \to \mathfrak{g}$ is an update function on the \emph{innovation} $I_k$. 

Under an equivariant observer framework, the innovation uses a geometrically invariant state error. Given the predicted state $\hat{x}_{k|k-1} = T_{\exp(f_e(u_{k-1}))}\hat{x}_{k-1}$ and the attacked measurement $\tilde{T}_{a_k}h(x_k) = h(T_{a_k} x_k)$, the invariant innovation is:
\begin{equation*}
    I_k \coloneqq h\bigl( T_{\hat{x}_{k|k-1}^{-1}} T_{a_k} x_k \bigr) - h(e).
\end{equation*}
This construction inherently cancels the global reference frame. Trajectories differing only by a shared transformation yield the exact same sequence of innovations, entirely avoiding restrictive linear assumptions on $h$. 

Since the attack enters the detector solely through $I_k$, detection is triggered evaluating its magnitude, $\| I_k \|$.

We can now define stealth formally.
\begin{defi}\label{def:stealth}
    Let $a_k \in \Gc$ be an attack translation at time $k$. Under the equivariant observer framework, the attack $a_k$ is:
    \begin{enumerate}
        \item \emph{undetectable} if $h\bigl( T_{\hat{x}_{k|k-1}^{-1}} T_{a_k} x_k \bigr) = h(e)$.
        \item \emph{$\tau$-stealthy} if $\bigl\| h\bigl( T_{\hat{x}_{k|k-1}^{-1}} T_{a_k} x_k \bigr)- h(e) \bigr\| \leq \tau$.
    \end{enumerate}
\end{defi}

Given these definitions, we are ready to state the problem.
\begin{prob}\label{prob:dyn_cons_traj_att}
    Suppose an attacker learns a $\tau$-stealthy observation action $\tilde{T}_a$ from a dataset $\Dc^{a}_{x_0^D}$. Under what conditions can this sensor injection $\tilde{T}_a$ be transferred to an arbitrary victim state $x \neq x_0^D$ while preserving its dynamical impact and $\tau$-stealth?
\end{prob}

To address Problem~\ref{prob:dyn_cons_traj_att}, the remainder of this paper assumes \emph{left-invariant dynamics} and models attacks as \emph{right translations}, $T_{\exp(\xi)}x=R_{\exp(\xi)}x = x \exp(\xi)$. We adopt this convention because it naturally captures body-frame sensor displacements in standard robotics models (e.g., $SE(2)$ and $SE(3)$). Equivalent results for right-invariant dynamics or left-translation attacks follow directly by symmetry and are omitted for brevity.

\section{Transfer Conditions}\label{sec:transfer}

Because the attacker’s primary mechanism requires the sensor to convincingly spoof the system state to a different pose, we first analyze how this right-translation spoofing alters the apparent state.

\subsection{Invariant State Transfer}\label{sec:inv_transfer}

The immediate question is to determine the conditions under which attacks that displace the state apply consistently across a range of scenarios. The next proposition applies a classic Lie group equivalence to answer this question.

\begin{prop}\label{prop:invariant_transfer}
    Let $a_k = \exp(\xi_k) \in \Gc$ denote the state displacement induced by the sensor attack $\tilde{R}_{a_k}h(x_k) = h(x_k a_k)$, and let $g_{k-1} := \exp(f_e(u_{k-1}))$. The dynamical impact of $\xi_k$ is the same for all $g_{k-1}$ if and only if
    \begin{equation}\label{eq:comm_condition}
        [f_e(u_{k-1}),\, \xi_k] = 0.
    \end{equation}
\end{prop}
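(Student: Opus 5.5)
We interpret the proposition through Definition~\ref{def:dynamical_impact}. With right-translation attacks, the conjugation is $c^R_g(h) = g^{-1} h g$ (or $g h g^{-1}$ depending on convention; the argument is symmetric). The dynamical impact is therefore
\[
d_k = g_{k-1}^{\pm 1}\exp(\xi_k)\,g_{k-1}^{\mp 1} = \exp\bigl(\Ad_{g_{k-1}^{\pm1}}\xi_k\bigr),
\]
using the standard identity $g\exp(\xi)g^{-1} = \exp(\Ad_g \xi)$, which follows from \eqref{eq:Ad} because conjugation is a group homomorphism. Here $g_{k-1}=\exp(f_e(u_{k-1}))$.

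"Same for all $g_{k-1}$" means that $d_k$ does not depend on the nominal step. We take the comparison point to be $g=e$, i.e.\ zero input, or equivalently we ask for independence as the step is scaled, $g(s)=\exp(s f_e(u_{k-1}))$ for $s\in\Rb$. The condition then reads $\exp(\Ad_{g(s)}\xi_k)=\exp(\xi_k)$ for all $s$.

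\textbf{Sufficiency.} If $[f_e,\xi_k]=0$, then $\Ad_{\exp(sf_e)} = e^{s\,\mathrm{ad}_{f_e}}$ gives
\[
\Ad_{\exp(sf_e)}\xi_k=\sum_n \frac{s^n}{n!}\,\mathrm{ad}_{f_e}^n\xi_k=\xi_k .
\]
Hence $d_k=\exp(\xi_k)$ for every $s$, so the impact is independent of the step.

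\textbf{Necessity.} Suppose $d_k(s)=\exp(\xi_k)$ for all $s$. Since $d_k(s)$ is a smooth curve, differentiate at $s=0$:
\[
\frac{d}{ds}\Big|_{s=0}\, g(s)\exp(\xi_k)g(s)^{-1}=0 .
\]
To get from this to the bracket, we do not pass through the exponential (which need not be injective). Instead we consider the whole one-parameter subgroup $\exp(t\xi_k)$. If the displacement is identified with the subgroup, the attack being modeled by $\xi_k\in\mathfrak g$, invariance means $g(s)\exp(t\xi_k)g(s)^{-1}=\exp(t\xi_k)$ for all $t$. Differentiating in $t$ at $t=0$ gives $\Ad_{g(s)}\xi_k=\xi_k$ for all $s$. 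Differentiating in $s$ at $s=0$ then gives
\[
\mathrm{ad}_{f_e}\xi_k=[f_e(u_{k-1}),\xi_k]=0 .
\]

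\textbf{Main obstacle.} The delicate step is making "for all $g_{k-1}$" precise. The argument needs invariance over the family $\exp(sf_e)$, not at a single group element; at a single element a discrete symmetry such as rotation by $2\pi$ could give a false converse. It also needs the impact to be compared at the Lie-algebra level, $\Ad_{g}\xi_k=\xi_k$, rather than only through $\exp$, since $\exp$ may be non-injective. I would state this interpretation explicitly at the start of the proof. With it in place, both directions reduce to the series $e^{s\,\mathrm{ad}_{f_e}}$ and its derivative at $s=0$.
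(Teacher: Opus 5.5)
Your proposal is correct and follows essentially the paper's route. You write $d_k=\exp(\Ad_{g_{k-1}}\xi_k)$, show that $s\mapsto\Ad_{\exp(sf_e)}\xi_k$ is constant when the bracket vanishes (via the series $e^{s\,\mathrm{ad}_{f_e}}$, where the paper uses uniqueness for the linear ODE $\dot\phi=-[f_e,\phi]$ — the same fact), and recover $[f_e,\xi_k]=0$ by differentiating at $s=0$. Your two explicit interpretive choices, invariance over the whole family $\exp(sf_e)$ and comparison at the Lie-algebra level rather than through a possibly non-injective $\exp$, are what the paper's proof assumes tacitly: it equates $\exp(\Ad^R_{g_{k-1}}\xi_k)=\exp(\xi_k)$ with $\Ad^R_{g_{k-1}}\xi_k=\xi_k$, and in the converse asserts that the curve is constant from its value at a single $s$, so stating them up front is a genuine improvement.
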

\begin{proof}
    Let $f := f_e(u_{k-1})$ for brevity. The dynamical impact from~\eqref{eq:dyn_impact} is $d_k = g_{k-1}^{-1}\exp(\xi_k)g_{k-1} = \exp(\Ad^R_{g_{k-1}}\xi_k)$. Thus, $d_k$ is independent of $g_{k-1}$ if and only if $\Ad^R_{g_{k-1}}\xi_k = \xi_k$.
    
    Consider the curve $\phi(s) := \Ad^R_{\exp(sf)}\xi_k$. Its derivative is the linear ODE $\dot{\phi}(s) = -[f, \phi(s)]$ with $\phi(0) = \xi_k$~\cite{hall2015lie}. Since $[f,\cdot]$ is a bounded linear operator on the finite-dimensional space $\mathfrak{g}$, the Picard-Lindelöf theorem guarantees a unique global solution. Because $[f,\xi_k]=0$, the curve $\phi(s) = \xi_k$ satisfies the ODE, yielding $\Ad^R_{g_{k-1}}\xi_k = \phi(1) = \xi_k$. Conversely, if $\Ad^R_{g_{k-1}}\xi_k = \xi_k$, the curve is constant, and the derivative at $s=0$ directly yields $[f,\xi_k]=0$.
\end{proof}

Essentially, an attack parametrized by $\xi_k \in \mathfrak{g}$ need only commute with the zero-order hold dynamics $f_e(u_{k-1})$. This contrasts starkly with linear systems, where operations intrinsically commute, yielding trivially zero Lie brackets ($[f_e, \xi] \equiv 0$) and universally state-invariant attacks. On general Lie groups, however, operations do not commute, meaning attacks are only invariant across specific steady maneuvers generated by $u_{k-1}$.

Furthermore, Proposition~\ref{prop:invariant_transfer} extends naturally to attacks that are compositions of one-parameter subgroup elements:
\begin{cor}
    Let $a_k = \prod_i \exp(\xi_k^i) \in \Gc$. The dynamical impact of $a_k$ is the same for all $g_{k-1}$ if $[f_e(u_{k-1}), \xi_k^i] = 0, \, \forall i$.
\end{cor}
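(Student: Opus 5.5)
The plan is to reduce the corollary to Proposition~\ref{prop:invariant_transfer}, using the fact that conjugation by a fixed group element is a group automorphism. As in the proof of Proposition~\ref{prop:invariant_transfer}, write $g := g_{k-1}$ and $f := f_e(u_{k-1})$. With right-translation attacks, the dynamical impact~\eqref{eq:dyn_impact} of a general group element $a_k$ is
\begin{equation*}
    d_k = g^{-1} a_k g .
\end{equation*}
This is the natural extension of the one-parameter case, since $c^T_g$ is defined on all of $\Gc$ and not only on exponentials.

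The first step is to use the homomorphism property of conjugation. Inserting $g g^{-1} = e$ between consecutive factors gives
\begin{equation*}
    g^{-1}\Bigl(\prod_i \exp(\xi_k^i)\Bigr) g = \prod_i \bigl(g^{-1}\exp(\xi_k^i)\, g\bigr) = \prod_i \exp\bigl(\Ad^R_{g}\xi_k^i\bigr).
\end{equation*}
The second equality is the identity $g^{-1}\exp(\xi)g = \exp(\Ad^R_g \xi)$, which was already used in the proof of Proposition~\ref{prop:invariant_transfer}. The order of the product is preserved throughout, so noncommutativity among the $\xi_k^i$ causes no difficulty.

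The second step applies Proposition~\ref{prop:invariant_transfer} to each factor. Since $[f,\xi_k^i]=0$, the ODE argument in that proof shows $\Ad^R_{\exp(sf)}\xi_k^i = \xi_k^i$ for all $s$, so in particular $\Ad^R_g\xi_k^i=\xi_k^i$. Hence
\begin{equation*}
    d_k = \prod_i \exp(\xi_k^i) = a_k ,
\end{equation*}
which does not depend on $g_{k-1}$. This establishes the claim.

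No step is a genuine obstacle; the only point needing care is the notational one of confirming that $d_k$ is defined by conjugating the whole composite $a_k$. The statement asserts only sufficiency. The converse should fail, since the factors could fail to commute with $f$ individually while their product is still fixed by conjugation, so I would not attempt to prove an ``only if'' direction.
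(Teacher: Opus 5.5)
Your proof is correct and follows the same approach as the paper, which gives no separate proof and only says that Proposition~\ref{prop:invariant_transfer} ``extends naturally''. Conjugation by $g_{k-1}$ is a group automorphism, so $g_{k-1}^{-1}a_k g_{k-1}=\prod_i \exp(\Ad^R_{g_{k-1}}\xi_k^i)$, and the proposition fixes each factor; your remark that only sufficiency holds is also right (for instance, $\xi_k^2=-\xi_k^1\notin\Delta_{f_e}$ gives $a_k=e$).
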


\begin{rem}
    Condition~\eqref{eq:comm_condition} requires that conjugations for the nominal and victim dynamics agree:
    \begin{equation*}
        \exp(\Ad^R_{g_{k-1}^N}\xi_k) = \exp(\xi_k) = \exp(\Ad^R_{g_{k-1}^V}\xi_k),
    \end{equation*}
    since $[f_e(u_{k-1}),\xi_k] = 0$ implies $\Ad^R_{g_{k-1}}\xi_k = \xi_k$.
\end{rem}

\subsection{Invariance}\label{sec:Invariance}
The attack displacements $\exp(\xi)$ admit a degree of generalizability: there is a family of one-parameter maneuvers, characterized via the Lie bracket, that commute with any one-parameter subgroup attack. Conversely, each maneuver generated by $f_e$ admits a family of transferable attacks with the same dynamical impact, defined as follows.

The set of attacks satisfying~\eqref{eq:comm_condition} forms the \emph{invariant subset} for a given $f_e := f_e(u_k) \in \mathfrak{g}$, defined as
\begin{equation}
    \Delta_{f_e} = \ker([f_e, \cdot]) = \{ \xi \in \mathfrak{g} \mid [f_e, \xi] = 0 \} \subset \mathfrak{g}.
\end{equation}
By Proposition~\ref{prop:invariant_transfer}, every $\xi \in \Delta_{f_e}$ transfers invariantly to every $x \in \Gc$. Crucially, $\Delta_{f_e}$ depends only on the dynamics $f_e(u_{k-1})$, not on the state $x \in \Gc$. An attacker wishing to learn transferable attacks need not know where on the state manifold the system currently is, nor know the exact input $u_{k-1}$ that was applied, but only whether the chosen $\xi$ lies in $\Delta_{f_e}$ for the input the system may currently be applying.
\begin{exam}\label{ex:dubins_att_class}
    Consider a Dubins car with $f_e = v e_f + \omega e_\theta$, where $e_f, e_l$, and $e_\theta$ are the forward, lateral, and heading generators of $SE(2)$. For a displacement attack $\xi = a e_f + b e_l + c e_\theta$, the condition $[f_e, \xi] = 0$ expands to:
    \begin{equation*}
        [ve_f + \omega e_\theta,\; ae_f + be_l + ce_\theta] = -\omega b e_f + (\omega a - vc)e_l = 0.
    \end{equation*}
    For straight motion ($\omega = 0, v \neq 0$), this requires $c = 0$, leaving $a$ and $b$ free. Thus, $\Delta_{f_e} = \mathrm{span}(e_f, e_l)$, as illustrated by the purely lateral spoof in the top plot of Figure~\ref{fig:commuting_subspace}. Conversely, for curved motion ($\omega \neq 0$), the condition yields $b = 0$ and $a = \frac{v}{\omega}c$, meaning $\xi = c(\frac{v}{\omega}e_f + e_\theta)$. The only transferable attacks are displacements strictly along the trajectory as seen in the purple arrows in the bottom plot of Figure~\ref{fig:commuting_subspace}. Purely body-consistent (red) or world-consistent (green) offsets fail to commute ($\notin \Delta_{f_e}$) and distort the observed kinematics.
\end{exam}

\begin{figure}[t]
    \centering
    \includesvg[width=\linewidth]{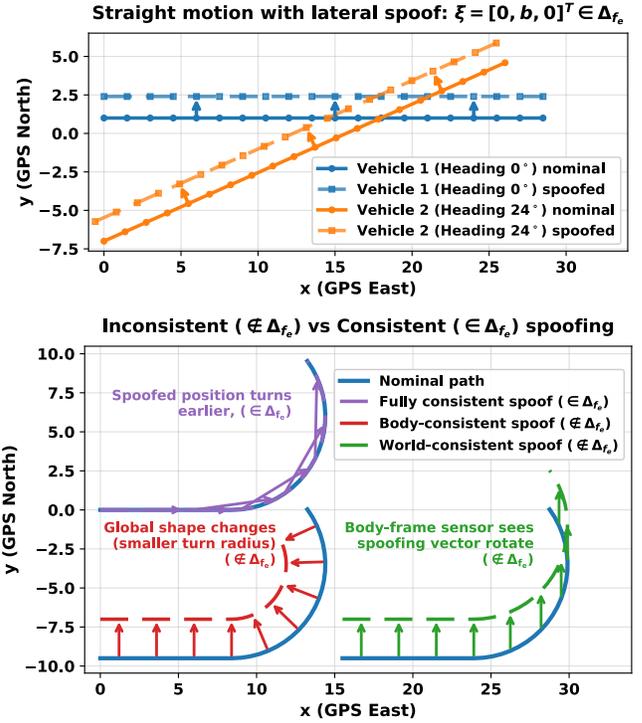}
    \caption{\small Visualization of the commuting subspace $\Delta_{f_e}$. (Top) A constant lateral attack perfectly commutes during straight-line motion. (Bottom) During turns, only spoofing along the nominal trajectory (purple) remains in $\Delta_{f_e}$. Inconsistent attacks, such as purely body-frame (red) or global (green) offsets, fail to commute and cause observable distortions.}
    \label{fig:commuting_subspace}
\end{figure}

Note that $\Delta_{f_e}$ is a \emph{Lie subalgebra}, meaning that $[\xi,\eta]\in\Delta_{f_e}$ for all $\xi,\eta\in\Delta_{f_e}$, which follows from the Jacobi identity:
\begin{equation}\label{eq:jacobi}
    [f_e,\,[\xi,\eta]]
    = -[\xi,\,[\eta,f_e]] - [\eta,\,[f_e,\xi]]
    = -[\xi,\,0] - [\eta,\,0] = 0.
\end{equation}
As an involutive distribution, the Frobenius theorem guarantees it generates a unique, connected integral manifold (or leaf) through each $x \in \Gc$. Provided the subgroup $H = \exp(\Delta_{f_e})$ is closed in $\Gc$, the group is partitioned into disjoint cosets $H_x = x \cdot H$.

\begin{cor}\label{cor:involutive}
    For all $\xi_0 \in \Delta_{f_e}$, the attacked state $x\cdot\exp(\xi_0)$ remains within the leaf $H_x$ passing through $x$.
\end{cor}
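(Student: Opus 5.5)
The plan is to reduce the claim to membership of $\exp(\xi_0)$ in the subgroup $H = \exp(\Delta_{f_e})$, and then use the coset structure $H_x = x\cdot H$ established just above the statement. First, note that $\Delta_{f_e} = \ker([f_e,\cdot])$ is a linear subspace of $\mathfrak{g}$, since the bracket is bilinear. By the Jacobi computation~\eqref{eq:jacobi} it is also a Lie subalgebra. The standard Lie subalgebra--subgroup correspondence therefore gives a unique connected Lie subgroup $H\subset\Gc$ whose Lie algebra is $\Delta_{f_e}$. This $H$ is generated by $\exp(\Delta_{f_e})$ and contains every one-parameter subgroup $s\mapsto\exp(s\xi_0)$ with $\xi_0\in\Delta_{f_e}$. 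In particular, $\exp(\xi_0)\in H$.

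Second, I will identify the leaf through $x$ with the coset $x\cdot H$. Left-translating $\Delta_{f_e}$ gives the left-invariant distribution $D_y = L_{y*}\Delta_{f_e}$, and its integral manifold through $e$ is $H$. Since $L_x$ maps integral manifolds of $D$ to integral manifolds of $D$, the leaf through $x$ is $L_x H = xH = H_x$. Under the closedness assumption, these cosets are embedded and partition $\Gc$, as stated before the corollary.

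Finally, I combine the two steps. Since $\exp(\xi_0)\in H$, we have $x\exp(\xi_0)\in xH = H_x$. An alternative, more direct check considers the curve $\gamma(s) = x\exp(s\xi_0)$. It satisfies $\dot\gamma(s) = L_{\gamma(s)*}\xi_0 \in D_{\gamma(s)}$, so $\gamma$ is an integral curve of $D$ that starts at $x$. By maximality of leaves, $\gamma$ never leaves $H_x$, and setting $s=1$ gives the claim.

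I expect the main obstacle to be the identification of the Frobenius leaf with the coset $xH$, together with the care needed about closedness. Without closedness, $H$ may be only immersed and not embedded, for example a dense line on a torus. The statement still holds in that case if ``leaf'' is read as the maximal integral manifold. I will appeal to the correspondence theorem (e.g.\ \cite{hall2015lie}) rather than reprove it. The rest of the argument is routine.
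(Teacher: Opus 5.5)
Your proof is correct. It reaches the result by a somewhat different route than the paper, though your "alternative, more direct check" is essentially the paper's argument. The paper says the flow $s\mapsto x\exp(s\xi_0)$ stays tangent to the leaf through $x$, which it attributes to involutivity, so the curve remains in that leaf. It then asserts that this leaf coincides with the coset $xH$ when $H$ is closed.

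Your main route works on the group side instead. The subalgebra--subgroup correspondence puts $\exp(\xi_0)$ in the connected subgroup $H$. Left-invariance of the distribution $L_{y*}\Delta_{f_e}$ then identifies the leaf through $x$ with $xH$.

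This buys precision on two points the paper blurs. First, tangency of $x\exp(s\xi_0)$ needs only $\xi_0\in\Delta_{f_e}$. Involutivity, i.e.\ the Jacobi computation, is what guarantees that the leaves (and hence $H$) exist. Second, the leaf through $x$ equals $xH$ as a set whether or not $H$ is closed. Closedness only upgrades it from an immersed to an embedded submanifold.

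You also correctly take $H$ to be the subgroup \emph{generated} by $\exp(\Delta_{f_e})$. The paper's $H=\exp(\Delta_{f_e})$ need not be closed under multiplication for a general Lie group. The paper's version is shorter but relies on these facts implicitly.
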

\begin{proof}
    Involutivity~\eqref{eq:jacobi} ensures the flow of $\xi_0$ stays tangent to the manifold $H_x$. Thus, $\exp(\xi_0)$ restricts the state to the local leaf $H_x$, which coincides with the global coset $xH$ if $H$ is a closed subgroup.
\end{proof}

Conceptually, Corollary~\ref{cor:involutive} reveals that invariant transferability fundamentally restricts reachability. To guarantee identical dynamical impacts across all absolute states, the attacker is restricted to commuting perturbations. This eliminates $\dim(\mathfrak{g}) - \dim(\Delta_{f_e})$ degrees of freedom, inherently confining the spoofed state to $H_x$. Since $H_x$ is entirely parameterized by the control input $f_e(u_k)$, the defender can implement a moving target defense. By actively sequencing $u_k$ to collapse the dimension of $\Delta_{f_e}$, the system structurally neutralizes the attacker's ability to arbitrarily manipulate the state with transferable attacks.

\subsection{Observational Transfer}\label{sec:obs_transfer}

To reliably anticipate the detector's reaction across varying trajectories, an attack must consistently induce a similar sequence of innovations $I_k$ and accumulated detector drift $\eta_k \in \Delta_{f_e}$ as it did on the nominal trajectory.

Recall from Definition~\ref{def:stealth} that the equivariant observer framework natively respects the state space geometry by deriving innovations from the invariant error $E_k$. Leveraging this structure, the exact algebraic conditions for stealthy transfer are established below.

\begin{prop}\label{prop:stealth_dynamics}
    Let $a_k = \exp(\xi_k) \in \Gc$ be the attack-induced state displacement, yielding measurements $\tilde{R}_{a_k}h(x_k) = h(x_k a_k)$. Let $g_{k-1} = \exp(f_e)$ denote the victim's one-step dynamics. If the victim dynamics commute with the historical detector drift,
    \begin{equation}\label{eq:stealth_commute}
        [\eta_{k-1}, f_e] = 0,
    \end{equation}
    then the innovation sequence $I_k$ remains identical to that of the nominal trajectory. Consequently, an attack engineered to be stealthy on the nominal trajectory remains stealthy under any such commuting dynamics.
\end{prop}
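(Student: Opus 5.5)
The plan is to write the innovation explicitly in terms of the invariant error and then show that, under \eqref{eq:stealth_commute}, every ingredient of that expression is independent of $f_e$ and of the absolute state.

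\textbf{Step 1: express the invariant error through the drift.} With right translations and $\hat{x}_{k-1} = x_{k-1}\exp(\eta_{k-1})$, the predicted state is
\begin{equation*}
\hat{x}_{k|k-1} = x_{k-1}\exp(\eta_{k-1})\,g_{k-1},
\end{equation*}
and the true state is $x_k = x_{k-1} g_{k-1}$. Hence the invariant error entering $h$ is
\begin{equation*}
E_k := \hat{x}_{k|k-1}^{-1} x_k a_k = g_{k-1}^{-1}\exp(-\eta_{k-1})\,g_{k-1}\,\exp(\xi_k) = \exp\bigl(-\Ad^R_{g_{k-1}}\eta_{k-1}\bigr)\exp(\xi_k).
\end{equation*}
The absolute states $x_{k-1}$ cancel, which is the frame-invariance remarked after the definition of $I_k$. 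The error therefore depends on the trajectory only through $g_{k-1}$, $\eta_{k-1}$ and $\xi_k$.

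\textbf{Step 2: remove the dependence on $g_{k-1}$.} This is exactly the argument of Proposition~\ref{prop:invariant_transfer}, applied with $\eta_{k-1}$ in place of $\xi_k$. Consider the curve $\phi(s)=\Ad^R_{\exp(sf_e)}\eta_{k-1}$. It solves the linear ODE $\dot\phi = -[f_e,\phi]$, and by \eqref{eq:stealth_commute} the constant curve $\phi\equiv\eta_{k-1}$ is a solution. Uniqueness then gives $\Ad^R_{g_{k-1}}\eta_{k-1}=\eta_{k-1}$.

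It follows that $E_k=\exp(-\eta_{k-1})\exp(\xi_k)$. The innovation $I_k = h(E_k)-h(e)$ is then a function only of $(\eta_{k-1},\xi_k)$. Since the attacker replays the same $\xi_k$ learned on the nominal trajectory, the victim's $I_k$ coincides with the nominal one whenever their drifts $\eta_{k-1}$ agree.

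\textbf{Step 3: induction over time.} This step ensures the drifts really do agree, and I expect it to be the main obstacle. I would set up an induction over $k$ starting from a common initial drift (e.g.\ $\eta_0=0$ for both trajectories). Assume the drift $\eta_{k-1}$ coincides on the nominal and victim trajectories. Then Step~2 gives identical innovations $I_k$.

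The update rule gives $\exp(\eta_k)=\exp(K(I_k))\exp(\Ad^R_{g_{k-1}}\eta_{k-1})$. By the same commutation this reduces to $\exp(K(I_k))\exp(\eta_{k-1})$, which involves no $f_e$, so $\eta_k$ is also identical on both trajectories.

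The subtle point is that \eqref{eq:stealth_commute} must hold at each step, with the nominal $f_e$ as well. Equivalently, both trajectories share the same reduced recursion, or one takes the nominal drift to be commuting too. I would state this as the standing hypothesis for the induction, in line with the remark in Section~\ref{sec:obs_transfer} that $\eta_k\in\Delta_{f_e}$.

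\textbf{Step 4: conclude stealth.} Under that hypothesis the whole sequence $\{I_k\}$ is identical to the nominal one. Definition~\ref{def:stealth} depends only on $\|I_k\|$, so undetectability and $\tau$-stealth are inherited directly from the nominal trajectory.
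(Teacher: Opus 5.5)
Your Steps 1--2 are exactly the paper's proof: the $x_{k-1}$ factors cancel in $E_k$, and the ODE-uniqueness argument for $s\mapsto\Ad^R_{\exp(sf_e)}\eta_{k-1}$ gives $\Ad^R_{g_{k-1}}\eta_{k-1}=\eta_{k-1}$, hence $E_k=\exp(-\eta_{k-1})\exp(\xi_k)$, independent of $g_{k-1}$. The paper stops at this one-step statement and tacitly treats $\eta_{k-1}$ as common to both trajectories, so your Step 3 induction (a shared $\eta_0$, with the bracket condition imposed at every step on both the nominal and the victim $f_e$) is a sound addition that makes the sequence-level claim precise; one small slip is that the paper's right-translation convention gives the update $\exp(\eta_k)=\exp(\Ad^R_{g_{k-1}}\eta_{k-1})\exp(K(I_k))$ rather than your order, which does not affect the argument.
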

\begin{proof}
    In an equivariant observer the innovation is evaluated on the invariant state error 
    $E_k = \hat{x}_k^{-1} x_k$. Let the attacked state be $x_k = x_{k-1}g_{k-1} a_k$ and 
    the estimated state be $\hat{x}_{k|k-1}  = x_{k-1}\exp(\eta_{k-1})g_{k-1}$, where 
    $\eta_{k-1}\in\mathfrak{g}$ encodes the detector drift.

    We first establish that the bracket condition~\eqref{eq:stealth_commute} implies group-level commutativity. Let $f := f_e$ for brevity. The curve $\psi(s) := \Ad^R_{\exp(sf)}\eta_{k-1}$ satisfies the linear ODE $\dot{\psi}(s) = -[f,\, \psi(s)]$, with initial condition  $\psi(0) = \eta_{k-1}.$ Since $[f, \eta_{k-1}]=0$ by hypothesis~\eqref{eq:stealth_commute}, the unique solution guaranteed by Picard-Lindelöf is the constant curve $\psi(s)\equiv\eta_{k-1}$. Evaluating at $s=1$ gives $\Ad^R_{g_{k-1}}\eta_{k-1} = \eta_{k-1}$, which expands to $g_{k-1}^{-1}\exp(\eta_{k-1})g_{k-1} = \exp(\eta_{k-1})$, or equivalently, $\exp(\eta_{k-1})\,g_{k-1} = g_{k-1}\,\exp(\eta_{k-1}).$
    Substituting this into the error equation gives
    $$
    \begin{aligned}
        E_k &= \bigl(x_{k-1}\exp(\eta_{k-1})g_{k-1}\bigr)^{-1}(x_{k-1}g_{k-1}\,a_k) \\
            &= g_{k-1}^{-1}\exp(-\eta_{k-1})g_{k-1}\,a_k 
             = \exp(-\eta_{k-1})\,a_k.
    \end{aligned}
    $$
    The resulting innovation $I_k = h(E_k)-h(e) = h(\exp(-\eta_{k-1})a_k)-h(e)$ is completely independent of the victim dynamics $g_{k-1}$, preserving stealth for any dynamics satisfying~\eqref{eq:stealth_commute}.
\end{proof}

Predictable stealth requires that the observer's state drift due to the attack $\eta_k \in \Delta_{f_e}$. Because nominal unattacked drift is typically negligible, $\eta_{k-1}$ is overwhelmingly driven by the attacker's past injections. This reveals that stealth is not an instantaneous property: the attacker's historical footprint must continuously commute with the system's subsequent dynamics. While an attacker cannot realistically verify $[\eta_{k-1}, f_e] = 0$ online, Condition~\eqref{eq:stealth_commute} establishes the fundamental geometric boundary for transferability.

\section{Main Results}\label{sec:main_results}

The previous section characterized ideal transfer conditions. In practice, the attacker faces informational and physical limitations: the true motion $g_{k-1}$ is never perfectly known, meaning the invariant subspace $\Delta_{f_e}$ cannot be explicitly computed. Rather than modeling the attacker's specific data-driven estimation errors individually, we analytically absorb these limitations into a single geometric decomposition. We separate the empirically realized attack $\hat{\xi}_k$ and detector drift $\hat{\eta}_k$ onto the true, unknown subspaces:
\begin{equation}\label{eq:decomp}
    \hat{\xi}_k
    = \underbrace{\xi_k}_{\in \Delta_{f_e}}
    + \underbrace{\rho^\xi_k}_{\not \in \Delta_{f_e}},
\end{equation}
and similarly for the detector drift,
\begin{equation}\label{eq:decomp_eta}
    \hat{\eta}_k
    = \underbrace{\eta_k}_{\in \Delta_{f_e}}
    + \underbrace{\rho^\eta_k}_{\not \in \Delta_{f_e}},
\end{equation}
Here, $\xi_k, \eta_k \in \Delta_{f_e}$ are the ideal invariantly transferable components. The bracket-violating residuals $\rho^\xi_k, \rho^\eta_k$ mathematically capture the attacker's fundamental ignorance of the true dynamics, alongside sensor noise and trajectory mismatches.

\begin{defi}\label{def:rich}
    An attack dataset $\Dc^{a}_{x_0^D}$ is \emph{$\epsilon$-rich} with respect to $\Delta_{f_e}$ if its empirical samples yield components such that:
    \begin{enumerate}
        \item The ideal components $\{\xi_k^{a,(i)}\}_{i=1}^M$ span $\Delta_{f_e}\subset\mathfrak{g}$, and
        \item The residual errors satisfy $\|\rho^{\xi,(i)}_k\|\leq\epsilon$ for all $i,k$.
    \end{enumerate}
\end{defi}

Intuitively, an $\epsilon$-rich dataset guarantees that the attacker has empirically collected a sufficiently diverse set of noisy base attacks to implicitly span the required geometric degrees of freedom, while ensuring that the worst-case leakage caused by their ignorance of $\Delta_{f_e}$ is strictly bounded by $\epsilon$. 

With these bounded residual errors defined, we now evaluate the change in impact and sensory footprint of transferring an attack.
\begin{thm}\label{thm:main}
    Let $\Dc^{a}_{x_0^D}$ be an $\epsilon$-rich attack dataset. Suppose an attacker extracts a sensor attack $\tilde{T}_{a_k}$ where $a_k = \exp(\hat{\xi}_k)$, and applies it to the victim's measurements such that $\tilde{y}_k = \tilde{T}_{a_k}h(x_k) = h(x_k a_k)$. Then, the transferred attack exhibits the following properties:
    \begin{enumerate}
        \item \emph{Dynamical impact.}\;
              The dynamical impact of the realized attack $\hat{\xi}_k$ satisfies
              \begin{equation} \label{eq:dyn_impact_error}
                  d_k = \exp(\xi_k + \Ad^R_{g_{k-1}} \rho^\xi_k),
              \end{equation}
              with deviation bounded by $\| \Ad^R_{g_{k-1}} \rho^\xi_k \| \leq \epsilon \| \Ad^R_{g_{k-1}}\|$.

        \item \emph{Stealth deviation.}\;
              Under an equivariant observer, the realized invariant state error $E_k$ evaluates to:
              \begin{equation} \label{eq:stealth_exact_diff}
                  E_k = \exp\bigl(-(\eta_{k-1} + \Ad^R_{g_{k-1}} \rho^\eta_{k-1})\bigr) \exp(\xi_k + \rho^\xi_k),
              \end{equation}
              yielding the realized innovation $I_k = h(E_k) - h(e)$.
    \end{enumerate}
\end{thm}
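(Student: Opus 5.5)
Both parts follow from the same two ingredients: the identity $g^{-1}\exp(\zeta)g=\exp(\Ad^R_g\zeta)$ (right-translation convention, as in the proof of Proposition~\ref{prop:invariant_transfer}), and linearity of $\Ad^R_{g}$ on $\mathfrak{g}$. Together with the fact, shown in Proposition~\ref{prop:invariant_transfer}, that $\Ad^R_{g_{k-1}}$ fixes every element of $\Delta_{f_e}$, these reduce each claim to a short algebraic computation. The only genuinely delicate point is how much we may assume about the splittings \eqref{eq:decomp}--\eqref{eq:decomp_eta}. These splittings are defined at the Lie algebra level, so we never need a Baker--Campbell--Hausdorff split of $\exp(\xi_k+\rho^\xi_k)$ into $\exp(\xi_k)\exp(\rho^\xi_k)$. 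That split would be false in general, and the statement is written precisely so that it is not needed.

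For the \textbf{dynamical impact}, we start from \eqref{eq:dyn_impact} with the realized displacement $\hat\xi_k$. Under right-translation attacks, $d_k=g_{k-1}^{-1}\exp(\hat\xi_k)g_{k-1}=\exp(\Ad^R_{g_{k-1}}\hat\xi_k)$, exactly as in the proof of Proposition~\ref{prop:invariant_transfer}. Linearity gives $\Ad^R_{g_{k-1}}\hat\xi_k=\Ad^R_{g_{k-1}}\xi_k+\Ad^R_{g_{k-1}}\rho^\xi_k$. Since $\xi_k\in\Delta_{f_e}$, Proposition~\ref{prop:invariant_transfer} (the constant-solution ODE argument) gives $\Ad^R_{g_{k-1}}\xi_k=\xi_k$, which yields \eqref{eq:dyn_impact_error}. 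The bound follows from the definition of the induced operator norm, $\|\Ad^R_{g_{k-1}}\rho^\xi_k\|\le\|\Ad^R_{g_{k-1}}\|\,\|\rho^\xi_k\|$, combined with $\|\rho^\xi_k\|\le\epsilon$.

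Here I must address one subtlety, and it is the step I expect to be the main obstacle. Definition~\ref{def:rich} bounds only the residuals of the dataset samples $\rho^{\xi,(i)}_k$, not the residual of the extracted attack $\hat\xi_k$. My plan is to read the extracted attack as one of the dataset samples, i.e.\ the attacker replays a sample, which gives $\|\rho^\xi_k\|\le\epsilon$ directly. If instead $\hat\xi_k$ is a linear combination $\sum_i c_i\hat\xi^{(i)}_k$, then the ideal part $\sum_i c_i\xi^{(i)}_k$ still lies in $\Delta_{f_e}$, because $\Delta_{f_e}$ is a subspace. The residual is then bounded by $\epsilon\sum_i|c_i|$, so the bound should be stated with the coefficients normalized, $\sum_i|c_i|\le1$. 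I would state this interpretation explicitly in the proof.

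For the \textbf{stealth deviation}, I redo the computation from the proof of Proposition~\ref{prop:stealth_dynamics}, replacing $\eta_{k-1}$ by $\hat\eta_{k-1}$ and $a_k$ by $\exp(\hat\xi_k)$, but without assuming commutation. The two states are $x_k=x_{k-1}g_{k-1}\exp(\hat\xi_k)$ and $\hat x_{k|k-1}=x_{k-1}\exp(\hat\eta_{k-1})g_{k-1}$. Then
\[
E_k=g_{k-1}^{-1}\exp(-\hat\eta_{k-1})g_{k-1}\exp(\hat\xi_k)=\exp\bigl(-\Ad^R_{g_{k-1}}\hat\eta_{k-1}\bigr)\exp(\hat\xi_k).
\]
Splitting $\hat\eta_{k-1}=\eta_{k-1}+\rho^\eta_{k-1}$ and using $\Ad^R_{g_{k-1}}\eta_{k-1}=\eta_{k-1}$ again (since $\eta_{k-1}\in\Delta_{f_e}$) gives \eqref{eq:stealth_exact_diff}. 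The innovation then follows by definition.

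Finally, I would remark on the structural point the paper emphasizes: the attack residual $\rho^\xi_k$ enters $E_k$ unrotated, whereas the inherited drift residual $\rho^\eta_{k-1}$ is acted on by the Adjoint.
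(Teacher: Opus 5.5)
Your proposal is correct and follows the paper's proof essentially step for step. Both use the conjugation identity $g_{k-1}^{-1}\exp(\zeta)g_{k-1}=\exp(\Ad^R_{g_{k-1}}\zeta)$, linearity of $\Ad^R_{g_{k-1}}$, and the fact that $\Ad^R_{g_{k-1}}$ fixes $\Delta_{f_e}$, applied once to $d_k$ and once to $E_k=\hat x_{k|k-1}^{-1}x_k$. Your extra care over the bound is well placed: the paper simply takes $\|\rho^\xi_k\|\le\epsilon$ for the extracted attack, whereas $\epsilon$-richness only bounds the dataset residuals $\rho^{\xi,(i)}_k$, and your replay-a-sample or $\sum_i|c_i|\le 1$ reading makes that implicit assumption explicit.
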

\begin{proof}
    \emph{1) Dynamical impact:} Recall the dynamical impact $d_k = g_{k-1}^{-1}\exp(\hat{\xi}_k) g_{k-1}$. Applying the Adjoint mapping property gives $d_k = \exp(\Ad^R_{g_{k-1}}\hat{\xi}_k)$. Substituting the decomposition $\hat{\xi}_k = \xi_k + \rho^\xi_k$ and using linearity of $\Ad$ yields $d_k = \exp(\Ad^R_{g_{k-1}}\xi_k + \Ad^R_{g_{k-1}}\rho^\xi_k)$. Since $\xi_k \in \Delta_{f_e}$, it commutes with the nominal flow, so $\Ad^R_{g_{k-1}}\xi_k = \xi_k$, and~\eqref{eq:dyn_impact_error} follows.

\emph{2) Stealth deviation:} In an equivariant observer, the innovation is evaluated on the invariant state error $E_k = \hat{x}_{k|k-1}^{-1}x_k$, such that $I_k = h(E_k) - h(e)$. Substituting the attacked state $x_k = x_{k-1}g_{k-1}\exp(\hat{\xi}_k)$ and the estimated state $\hat{x}_{k|k-1} = x_{k-1}\exp(\hat{\eta}_{k-1})g_{k-1}$ yields:
    \begin{equation*}
        E_k = \bigl(x_{k-1}\exp(\hat{\eta}_{k-1})g_{k-1}\bigr)^{-1} \bigl(x_{k-1}g_{k-1}\exp(\hat{\xi}_k)\bigr).
    \end{equation*}
    Expanding the inverse and grouping the $g_{k-1}$ terms allows us to apply the Adjoint action:
    \begin{equation*}
    \begin{aligned}
        E_k &= g_{k-1}^{-1}\exp(-\hat{\eta}_{k-1})g_{k-1}\exp(\hat{\xi}_k) \\ 
        &= \exp(-\Ad^R_{g_{k-1}}\hat{\eta}_{k-1})\exp(\hat{\xi}_k).
    \end{aligned}
    \end{equation*}
    Applying the decompositions $\hat{\xi}_k = \xi_k + \rho^\xi_k$ and $\hat{\eta}_{k-1} = \eta_{k-1} + \rho^\eta_{k-1}$, and noting that $\Ad^R_{g_{k-1}}\eta_{k-1} = \eta_{k-1}$ since $\eta_{k-1} \in \Delta_{f_e}$, we directly obtain the realized state error~\eqref{eq:stealth_exact_diff}.
\end{proof}

Equations \eqref{eq:dyn_impact_error} and \eqref{eq:stealth_exact_diff} show that without residuals, the attack transfers perfectly. Otherwise, dynamical deviation is bounded by $\epsilon\|\Ad^R_{g_{k-1}}\|$ and stealth deviation gains drift via $\Ad^R_{g_{k-1}}\rho^\eta_{k-1}$.

Equation \eqref{eq:stealth_exact_diff} also reveals a loophole: an unpredictable attack can stay stealthy if $h(E_k) = h(\tilde{E}_k)$, where $\tilde{E}_k \coloneqq \exp(-\eta_{k-1})\exp(\xi_k)$ is the ideal error. This mimics the linear case where residual errors in $E_k$ map to the same sensory footprint as the ideal error.

\begin{rem}
    Theorem \ref{thm:main} exposes a fundamental asymmetry. While dynamical impact scales with the system's absolute configuration due to $\Ad^R_{g_{k-1}}\rho^\xi_k$, stealth deviation benefits from the equivariant framework. The dynamics $g_{k-1}$ are absorbed into the Adjoint conjugation of the drift, keeping the detector's innovation frame-independent.
\end{rem}

\section{Numerical Example}\label{sec:examples}

Consider a Dubins car with state $z = (x, y, \theta) \in SE(2)$, whose continuous-time dynamics in global coordinates are
\begin{equation}
    \dot x = v\cos\theta, \quad \dot y = v\sin\theta, \quad \dot\theta = \omega,
\end{equation}
or equivalently in the body frame, $\dot z = z(ve_f + \omega e_\theta)$, so $f_e = ve_f + \omega e_\theta \in \mathfrak{se}(2)$. The vehicle measures its position via a GPS-like sensor and a LIDAR-like odometry sensor, yielding the mixed sensor suite $h(z) = [x, y, f_s, l_s]^\top$.

Let the nominal training trajectory move straight along the global $x$-axis ($\theta = 0$, $\omega = 0$), expanding the commuting subspace to $\Delta_{f_e} = \mathrm{span}(e_f, e_l)$ (Example~\ref{ex:dubins_att_class}). Starting at $t = 60$\,s, the attacker observes a multi-modal lateral deviation $\Delta h = [0, 10, 0, 10]^\top$. Because the global and body frames align when $\theta = 0$, this combined sensor drift maps directly to the right-invariant local displacement $\xi_k = [0,10,0]^\top \in \mathfrak{se}(2)$ for $k \geq 60$.

To deploy this learned displacement against a victim on an arbitrary trajectory, simply replaying the constant bias $\Delta h$ would violate physical consistency between the global and local sensors. Instead, the attacker dynamically coordinates the spoofing vector using their heading estimate $\hat{\theta}$:
\begin{equation}
    h(z^V_{\text{attacked}}) = h(z^V) + \begin{bmatrix} -10 \sin \hat{\theta} & 10 \cos \hat{\theta} & 0 & 10 \end{bmatrix}^\top.
\end{equation}
This orientation-dependent update successfully realizes the multi-sensor observation action $\tilde T_a h(z^V) = h(z^V T_{\exp(\xi_k)})$.

Proposition~\ref{prop:invariant_transfer} and Theorem~\ref{thm:main} state that attack transferability and impact are governed by $\Delta_{f_e}$ and the Adjoint operator $\Ad^R_{g_{k-1}}$ of the one-step flow $g_{k-1} = \exp(f_e\Delta t)$. For any Dubins maneuver, along-trajectory displacements inherently lie in the commuting subspace $\Delta_{f_e}$. As Figure~\ref{fig:adjoint_error_bound} illustrates, this invariant attack smoothly ``drags'' the detector bias forward along the predicted path.

In practice, synthesized attacks may introduce out-of-subspace residuals $\rho^\xi_k \notin \Delta_{f_e}$. Their amplification is dictated by the Adjoint operator of the local flow $g_{k-1} = (p_f, p_l, p_\theta) \in SE(2)$:
\begin{equation}\label{eq:adjoint_general}
\Ad^R_{g_{k-1}} = \begin{bmatrix} \cos p_\theta & -\sin p_\theta & p_l \\ \sin p_\theta & \cos p_\theta & -p_f \\ 0 & 0 & 1 \end{bmatrix},
\end{equation}
with the induced 2-operator norm yielding $\|\Ad^R_{g_{k-1}}\|_2 = \frac{1}{2}\left( r + \sqrt{r^2 + 4} \right)$, where $r = \sqrt{p_f^2 + p_l^2}$ is the translational magnitude. This directly provides the dynamical impact bound $\|\Ad^R_{g_{k-1}}\rho^\xi_k\|_2 \leq \epsilon\|\Ad^R_{g_{k-1}}\|_2$ from Theorem~\ref{thm:main}. Mismatches in the heading direction excite the last column, which is scaled by the potentially large spatial displacements $p_f$ and $p_l$. The visual result is that the actual noisy spatial deviation is guaranteed to stay within the theoretical Adjoint bound, plotted as the dashed blue circles in Figure~\ref{fig:adjoint_error_bound}.

Crucially, for a lateral residual $\rho^\xi_k = [0,\epsilon,0]^\top$, the last column is not excited:
\begin{equation}\label{eq:adjoint_error}
    \Ad^R_{g_{k-1}}\rho^\xi_k = \begin{bmatrix} -\epsilon\sin p_\theta \\ \epsilon\cos p_\theta \\ 0 \end{bmatrix}.
\end{equation}
These zero entries bypass the translational parameters $p_f$ and $p_l$ entirely. The residual merely rotates, preserving its exact magnitude $\epsilon$ and shielding the attack from amplification.

Consider the simulated victim at $t = 20.0$\,s, where $v = 13.96$\,m/s and 
$\omega = -1.02$\,rad/s over $\Delta t = 0.5$\,s, yielding local integration 
parameters $p_f = 3.316$\,m, $p_l = -0.408$\,m, $p_\theta = -0.245$\,rad and 
$\|\Ad^R_{g_{k-1}}\|_2 = 3.618$. The attacker targets 
$\xi_{\text{ideal}} = \alpha f_e = [3.350,\, 0,\, -0.245]^\top \in \Delta_{f_e}$ 
with $\alpha = 0.3$, and injects a lateral residual $\rho^\xi_k = [0,\epsilon,0]^\top$ 
with $\epsilon = 0.44$. By~\eqref{eq:adjoint_error}, the residual merely rotates 
under the Adjoint, giving $\|\Ad^R_{g_{k-1}}\rho^\xi_k\|_2 = \epsilon = 0.44$, well 
within the conservative bound $\epsilon\|\Ad^R_{g_{k-1}}\|_2 = 1.59$. The effective 
displacement is
\begin{equation}
    \xi_{\text{eff}} = \xi_{\text{ideal}} + \Ad^R_{g_{k-1}}\rho^\xi_k
    = \begin{bmatrix} 3.350 \\ 0 \\ -0.245 \end{bmatrix}
    + \begin{bmatrix} 0.107 \\ 0.427 \\ 0 \end{bmatrix}
    = \begin{bmatrix} 3.457 \\ 0.427 \\ -0.245 \end{bmatrix},
\end{equation}
remaining close to $\exp(\xi_{\text{ideal}})$ and confirming that a lateral 
residual incurs no Adjoint amplification despite the large translational 
parameters $p_f$ and $p_l$. The realized deviations (orange) in 
Figure~\ref{fig:adjoint_error_bound} remain strictly within the total bound 
(dashed blue) at every time step, never breaching the detection threshold 
$\tau = 5.0$\,m (dashed green).

\begin{figure}[t]
    \centering
    \includesvg[width=\linewidth]{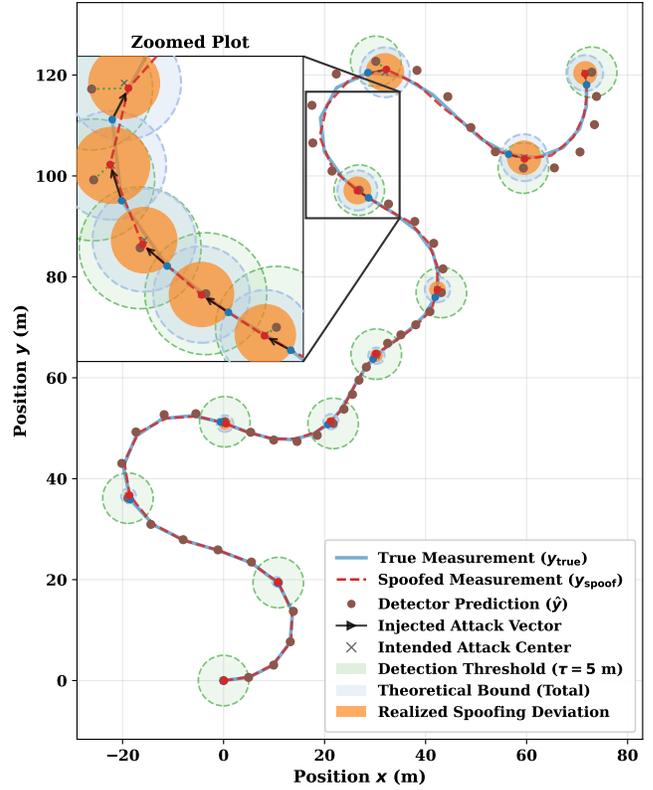}
    \caption{\small Estimator drift under the spoofed attack over a curved 
    trajectory. The true path (blue) diverges from the spoofed measurement 
    (red dashed) by the injected attack vectors (arrows), while the detector 
    prediction (brown dots) is dragged along the spoofed signal. Realized spoofing 
    deviations (orange) remain within the total theoretical bound (dashed blue) at 
    all times. The steady-state tracking error stays below the detection threshold 
    $\tau = 5.0$\,m (dashed green), confirming $\tau$-stealthiness. The zoomed 
    inset highlights that lateral residuals preserve their magnitude under the 
    Adjoint, preventing amplification regardless of vehicle velocity.}
    \label{fig:adjoint_error_bound}
\end{figure}

Figure~\ref{fig:training_vs_transfer} isolates the structural design from noise over time. During training (dotted lines), the detector innovation remains strictly below the dynamical impact: the equivariant estimator tracks the spoofed signal so faithfully that its residual understates the true spatial displacement, justifying dynamical impact as the primary stealth metric. Upon transfer with residual $\epsilon=0.44$, the innovation (solid brown) transiently exceeds the physical impact (solid orange) as accumulated noise disrupts tracking. Despite this, both quantities remain within the total bound $\|\xi_{\text{ideal}}\| + \epsilon\|\text{Ad}^R_{g_k}\|_2$ (dashed blue), which peaks at $\tau=5.0$\,m, confirming that the training margin is sufficient to maintain $\tau$-stealthiness throughout deployment at the victim.

\begin{figure}[t]
    \centering
    \includesvg[width=\linewidth]{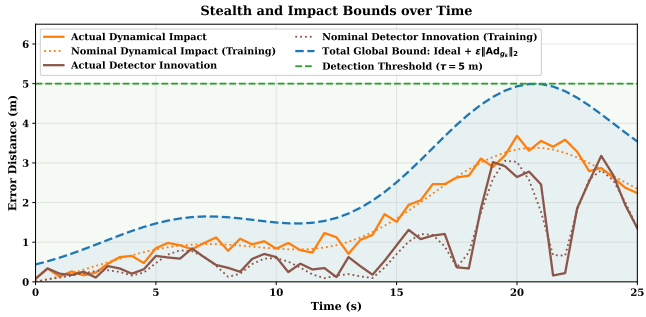}
    \caption{\small Nominal dynamical impact and 
    detector innovation (dotted, training phase) stay near zero. Upon erroneous transfer with 
    $\epsilon = 0.44$, both quantities (solid lines) grow with translational 
    displacement but remain within the total bound 
    $\|\xi_{\text{ideal}}\| + \epsilon\|\Ad^R_{g_k}\|_2$ (dashed blue), which 
    peaks at the detection threshold $\tau = 5.0$\,m, confirming that the margin 
    established during training is sufficient to maintain $\tau$-stealthiness. }
    \label{fig:training_vs_transfer}
\end{figure}

\section{Conclusions}\label{sec:conclusions}
This paper introduces a geometric framework for analyzing sensor spoofing transferability on Lie groups. The central and perhaps surprising finding is that transferability reduces entirely to a single algebraic object: the centralizer $\Delta_{f_e} = \ker([f_e, \cdot])$ of the nominal flow. Unlike linear systems where every attack transfers trivially, non-commutative dynamics confine transferable attacks to this invariant subspace. By formalizing the error mechanics of imperfect attacks, we further exposed a structural asymmetry: physical trajectory deviations are amplified by the Adjoint action, whereas detector innovations remain shielded by the observation map's invariance. As validated on the $SE(2)$ Dubins unicycle, this geometric loophole allows attackers to hijack estimators undetected. Ultimately, $\Delta_{f_e}$ reframes attack transferability as a kinematic constraint intrinsic to the system's symmetry.

Future work will focus on several key extensions of this geometric framework:
\begin{itemize}
    \item Analyzing closed-loop systems under potentially coordinated actuator-sensor spoofing.
    \item Tightening bounds on the Adjoint operator ($\operatorname{Ad}$) and formalizing $\epsilon$-conditions for $\tau$-stealth.
    \item Generalizing the invariance assumptions to accommodate both direct-measurement observers and time-varying inter-sample inputs, which requires bounding the geometric divergence of compounding flows.
\end{itemize}

\bibliographystyle{unsrt}
\bibliography{bibliography.bib}

\begin{thebibliography}{10}

\bibitem{tippenhauer2011requirements}
Nils~Ole Tippenhauer, Christina P\"{o}pper, Kasper~Bonne Rasmussen, and Srdjan Capkun.
\newblock On the requirements for successful {GPS} spoofing attacks.
\newblock In {\em Proceedings of the 18th {ACM} Conference on Computer and Communications Security}, CCS '11, pages 75--86, New York, NY, USA, 2011. Association for Computing Machinery.

\bibitem{kerns2014unmanned}
Andrew~J. Kerns, Daniel~P. Shepard, Jahshan~A. Bhatti, and Todd~E. Humphreys.
\newblock Unmanned aircraft capture and control via {GPS} spoofing.
\newblock {\em J. Field Robot.}, 31(4):617--636, July 2014.

\bibitem{dai2023spoofing}
Adam Dai, Tara Mina, Ashwin Kanhere, and Grace Gao.
\newblock Spoofing-resilient {LiDAR-GPS} factor graph localization with chimera authentication.
\newblock In {\em 2023 IEEE/ION Position, Location and Navigation Symposium (PLANS)}, pages 470--480, 2023.

\bibitem{aftatah2026secure}
Mohammed Aftatah, Abdelhak Khalil, and Khalid Zebbara.
\newblock Secure navigation through {GPS/INS} integration: Comparative analysis of supervised deep learning and kalman filtering for precision and spoofing detection.
\newblock {\em IEEE Access}, 14:18581--18594, 2026.

\bibitem{lyons2026distributed}
Lorenzo Lyons, Manuel Boldrer, and Laura Ferranti.
\newblock Distributed attack-resilient platooning against false data injection.
\newblock {\em IEEE Transactions on Vehicular Technology}, 75(3):3888--3903, 2026.

\bibitem{amin2013cyber}
Saurabh Amin, Xavier Litrico, Shankar Sastry, and Alexandre~M. Bayen.
\newblock Cyber security of water {SCADA} systems—part i: Analysis and experimentation of stealthy deception attacks.
\newblock {\em IEEE Transactions on Control Systems Technology}, 21(5):1963--1970, 2013.

\bibitem{cardenas2008research}
Alvaro~A. C\'{a}rdenas, Saurabh Amin, and Shankar Sastry.
\newblock Research challenges for the security of control systems.
\newblock In {\em Proceedings of the 3rd Conference on Hot Topics in Security}, HOTSEC'08, USA, 2008. USENIX Association.

\bibitem{mo2009replay}
Yilin Mo and Bruno Sinopoli.
\newblock Secure control against replay attacks.
\newblock In {\em 2009 47th Annual Allerton Conference on Communication, Control, and Computing ({Allerton})}, pages 911--918, 2009.

\bibitem{pasqualetti2013attack}
Fabio Pasqualetti, Florian D\"{o}rfler, and Francesco Bullo.
\newblock Attack detection and identification in cyber-physical systems.
\newblock {\em IEEE Transactions on Automatic Control}, 58(11):2715--2729, 2013.

\bibitem{teixeira2015secure}
Andr\'{e} Teixeira, Kin~Cheong Sou, Henrik Sandberg, and Karl~Henrik Johansson.
\newblock Secure control systems: A quantitative risk management approach.
\newblock {\em IEEE Control Systems Magazine}, 35(1):24--45, 2015.

\bibitem{chong2019tutorial}
Michelle~S. Chong, Henrik Sandberg, and Andr\'{e} Teixeira.
\newblock A tutorial introduction to security and privacy for cyber-physical systems.
\newblock In {\em 2019 18th European Control Conference ({ECC})}, pages 968--978, 2019.

\bibitem{bloch2015nonholonomic}
Anthony~M. Bloch.
\newblock {\em Nonholonomic Mechanics and Control}.
\newblock Springer New York, New York, NY, 2nd edition, 2015.

\bibitem{zhang2022stealthy}
Kangkang Zhang, Christodoulos Keliris, Thomas Parisini, and Marios~M. Polycarpou.
\newblock Stealthy integrity attacks for a class of nonlinear cyber-physical systems.
\newblock {\em IEEE Transactions on Automatic Control}, 67(12):6723--6730, 2022.

\bibitem{sola2021microlietheorystate}
Joan Solà, Jeremie Deray, and Dinesh Atchuthan.
\newblock A micro {Lie} theory for state estimation in robotics, 2021.

\bibitem{murray1994mathematical}
Richard~M. Murray, Zexiang Li, and S.~Shankar Sastry.
\newblock {\em A Mathematical Introduction to Robotic Manipulation}.
\newblock CRC Press, 1st edition, 1994.

\bibitem{barrau2017invariant}
Axel Barrau and Silv\`{e}re Bonnabel.
\newblock The invariant extended {Kalman} filter as a stable observer.
\newblock {\em IEEE Transactions on Automatic Control}, 62(4):1797--1812, 2017.

\bibitem{mahony2022observer}
Robert Mahony, Pieter van Goor, and Tarek Hamel.
\newblock Observer design for nonlinear systems with equivariance.
\newblock {\em Annual Review of Control, Robotics, and Autonomous Systems}, 5(Volume 5, 2022):221--252, 2022.

\bibitem{coulson2019deepc}
Jeremy Coulson, John Lygeros, and Florian D\"{o}rfler.
\newblock Data-enabled predictive control: In the shallows of the {DeePC}.
\newblock In {\em 2019 18th European Control Conference ({ECC})}, pages 307--312, 2019.

\bibitem{berberich2020nonlinear}
Julian Berberich, Johannes K\"{o}hler, Matthias~A. M\"{u}ller, and Frank Allg\"{o}wer.
\newblock Linear tracking {MPC} for nonlinear systems---part {II}: The data-driven case.
\newblock {\em IEEE Transactions on Automatic Control}, 67(9):4406--4421, 2022.

\bibitem{alisic2021ecc}
Rijad Alisic and Henrik Sandberg.
\newblock Data-injection attacks using historical inputs and outputs.
\newblock In {\em 2021 European Control Conference ({ECC})}, pages 1399--1405, 2021.

\bibitem{taheri2021data}
Mahdi Taheri, Khashayar Khorasani, Iman Shames, and Nader Meskin.
\newblock Data-driven covert-attack strategies and countermeasures for cyber-physical systems.
\newblock In {\em 2021 60th IEEE Conference on Decision and Control ({CDC})}, pages 4170--4175, 2021.

\bibitem{krishnan2021detection}
Vishaal Krishnan and Fabio Pasqualetti.
\newblock Data-driven attack detection for linear systems.
\newblock {\em IEEE Control Systems Letters}, 5(2):671--676, 2021.

\bibitem{hall2015lie}
Brian~C. Hall.
\newblock {\em Lie Groups, Lie Algebras, and Representations: An Elementary Introduction}, volume 222 of {\em Graduate Texts in Mathematics}.
\newblock Springer International Publishing, 2nd edition, 2015.

\end{thebibliography}

\end{document}